\newtheorem{theorem}{Theorem}[section]
\newtheorem{lemma}[theorem]{Lemma}
\newtheorem{definition}{Definition}
\begin{document}

\title{Effective Vaccination Strategies in Network-based SIR Model}

\author{Sourin Chatterjee}
\email{sc19ms170@iiserkol.ac.in}
\affiliation{Department of Mathematics and Statistics, Indian Institute of Science Education and Research, Kolkata, West Bengal 741246, India}
\author{Ahad N. Zehmakan}
\affiliation{School of Computing, The Australia National University, Canberra, Australia}

\begin{abstract}
Controlling and understanding epidemic outbreaks has recently drawn great interest in a large spectrum of research communities. Vaccination is one of the most well-established and effective strategies in order to contain an epidemic. In the present study, we investigate a network-based virus-spreading model building on the popular SIR model. Furthermore, we examine the efficacy of various vaccination strategies in preventing the spread of infectious diseases and maximizing the survival ratio. The experimented strategies exploit a wide range of approaches such as relying on network structure centrality measures, focusing on disease-spreading parameters, and a combination of both. Our proposed hybrid algorithm, which combines network centrality and illness factors, is found to perform better than previous strategies in terms of lowering the final death ratio in the community on various real-world networks and synthetic graph models. Our findings particularly emphasize the significance of taking both network structure  properties and disease characteristics into account when devising effective vaccination strategies.
\end{abstract}

\maketitle

\section{Introduction}\label{Intro}

The spread of infectious diseases has long been a major public health concern. It has caused significant deaths, comorbidity, and economic losses throughout human history, as noted in ~\cite{manriquez2021spread}. It is a very complex phenomenon, and understanding the effects needs contribution from several fields of study such as epidemiology, public health, medicine, mathematics, sociology, and computer science. Mathematical modeling, which draws on disciplines such as non-linear dynamics ~\cite{upadhyay2022combating}, graph theory ~\cite{wang2017vaccination}, and statistics ~\cite{balli2021data}, is key to understanding the negative impact of infectious diseases on the population. It can provide insights into the spread of the disease, help predict the course of epidemics, estimate the number of hospitalizations and deaths, and evaluate the effectiveness of different control measures. The emergence of global pandemics such as the COVID-19~\cite{boccaletti2020modeling}, and the H1N1 Influenza~\cite{etemad2022some} in recent years has highlighted the importance of understanding and designing measures to control the dynamics of infectious disease transmission. 

Various disease modeling approaches have been attempted with a view to understanding the dynamics of the pandemic. Compartment models are the simplest way to study epidemics, where compartmentalizing populations is done on the basis of whether the disease generates immunity or not. The SIR (Susceptible-Infected-Recovered) model, developed by Kermack and McKendrick in 1927,~\cite{kermack1927contribution} is one of the classic examples of this.  However, when the heterogeneity of the data is large and several meta-data like contact information, and age have a role to play in the disease dynamics such models would fail to approximate the general principle of the spreading process to a reasonable extent.

A network disease model is a type of mathematical model that is used to investigate the transmission of infectious diseases through networks of social contacts. In a network, individuals are represented as nodes in a graph, and the edges between them represent their interactions, such as encounters at work or in public places. The weights assigned to the edges between nodes represent the probability that an infected person will transmit the disease to their connected neighbors. The weights depend on various parameters such as the strength of the relationship (i.e., the likelihood of encounter), the speed of transmission, and the length of the infectious period. In the building of a network model, parameters relevant to disease transmission should be established. This would permit studying different intervention strategies like vaccination or quarantine, in order to curb the disease, in a much more accurate and realistic fashion.

Though testing~\cite{upadhyay2020age, ghosh2021optimal}, contact tracing~\cite{kojaku2021effectiveness} and quarantining are effective measures at the early stages of a pandemic~\cite{mandal2020model}, one of the most effective ways to prevent the spread of infectious diseases is vaccination. However, the optimal allocation of vaccines, especially with limited resources, is a complex and difficult matter. In light of COVID-19 and potential future pandemics, the availability of a vaccine prioritization method is, therefore, a critical concern, cf.~\cite{petrizzelli2022beyond}. Keeping in mind that future epidemics and other viruses may affect people differently, designing and understanding various effective vaccination strategies are very essential for the successful containment of potential pandemics in the future.  

In the case of COVID-19 vaccination, most of the governments prioritized the elderly individuals, due to the proven higher fatality rate, and the healthcare workers, due to the higher likelihood of being exposed to the virus, cf.~\cite{goldstein2021vaccinating}.
% According to Goldstein et al.~\cite{goldstein2021vaccinating} this approach allowed for a great number of lives to be saved. 
A good prioritization plan should balance the need to vaccinate vulnerable populations with the need to vaccinate those who are most likely to spread the disease. A successful vaccine allocation strategy needs to take a model of disease transmission and its epidemiological aspects, like transmission rate, mortality rate, and recovery rate into account, as well as the underlying network structure.

Most prior works have disregarded the network structure and assumed that individuals are equally likely to interact with each other, for the sake of simplicity in the analysis ~\cite{aruffo2022community}. In our approach, we take the structure of the network, capturing the connections among people, into account and study the virus spreading processes on different real-world and synthetic graph data. Furthermore, we model the spread of infectious diseases using the SIR model, taking  various factors such as transmission probability and cure rates into account. This provides a valuable test bed for evaluating and comparing different vaccination strategies.

Our main goal is to devise effective and efficient vaccination strategies in this enriched set-up. We first show that finding an optimal strategy is computationally challenging (more precisely, it is NP-hard). Thus, as prior work~\cite{khansari2016centrality}, we resort to approximation and heuristic approaches. We develop a set of vaccination strategies that consider different properties and characteristics of the network as well as the disease to efficiently allocate scarce vaccine resources. This is unlike most prior work which focuses solely on network structure\cite{petrizzelli2022beyond} or diseases characteristics~\cite{goldstein2021vaccinating}. Our strategies aim to minimize the number of deaths while maximizing the number of lives saved.

We compare our proposed strategies along with several centrality-based vaccination strategies that have received significant attention in recent years, cf.~\cite{khansari2016centrality, petrizzelli2022beyond}. Centrality-based vaccination strategies prioritize people based on their location on the network, with the goal of identifying those who have the greatest potential to infect others. Targeting these individuals can effectively reduce overall disease transmission. However, unlike our proposed hybrid approaches, these strategies rely solely on the structure of the graph and do not exploit any information regarding the spread of the virus. Through an extensive set of experiments, we evaluate the performance of various strategies for different levels of vaccination coverage. Our results highlight the strengths and weaknesses of each approach, providing valuable insights for public health officials and policymakers in their efforts to stem the spread of infectious diseases.

In short, the present work contributes to the growing literature on epidemic models and vaccination strategies by providing a robust simulation framework and introducing new effective approaches to vaccine allocation. Furthermore, our comparative analysis provides valuable insights into the effectiveness of different strategies, paving the way for more targeted and effective vaccination campaigns in the face of future epidemics and pandemics.

\textbf{Roadmap.} First, we overview some further related work in more detail in Section~\ref{Work}. Then, we provide the exact formulation of our epidemic model and the problem of maximizing the survival ratio using vaccination in Section~\ref{Model}. In section~\ref{hardness}, we prove that finding an optimal vaccination strategy is NP-hard. To facilitate the necessary grounds for introducing and experimenting with different strategies, we describe the dataset and also the parameters on which the experiments were performed in Section~\ref{Experiment}. Then in Section~\ref{Vaccination}, we develop several vaccination algorithms. Finally, the outcome of our experiments and their analysis are provided in Sections~\ref{Results} and~\ref{Discussion}, respectively.

\section{Prior Work}\label{Work}

In this section, we overview some related works on various mathematical epidemic models and different control strategies, particularly vaccination.

\subsection{Epidemic Models}
One of the oldest sets of epidemic models are the compartmental models, such as SIR~\cite{kermack1927contribution, Sarkar2020modeling}. In this approach, one assumes a homogeneous mixing of the population, which permits utilizing differential equations to describe the flow of individuals between compartments over time. Though these models are usually completely deterministic and simple to analyze, they have some fundamental shortcomings and thus various extensions of them have been introduced. By considering stochastic techniques, authors of~\cite{pajaro2022stochastic} showed that such stochastic models capture the uncertainty of the system and perform better in order to predict the spread of viral diseases. Additionally, several other techniques like agent-based modeling\cite{silva2020covid,shamil2021agent}, age-based models~\cite{upadhyay2020age, upadhyay2022combating}, network models~\cite{manriquez2021spread, petrizzelli2022beyond}, machine learning or deep learning models~\cite{wang2020prediction, zeroual2020deep} have been proven useful to model the epidemics more accurately and gain deep insights into virus spreading dynamics. 

Moreno et al.~\cite{moreno2002epidemic} proposed an epidemic model on complex networks that incorporates the SIR model. By analyzing the effect of network topology, they argued that there is a threshold of spreading parameters that determines whether an epidemic will occur or not. In~\cite{preciado2009spectral}, the authors studied a simple variant of SIR on Random Geometric Graphs and showed a result of a similar flavor. They demonstrated that if $\lambda_{\max}\le \delta/\beta$, then the virus does not spread, where $\delta$ is the recovery probability and $\beta$ is the probability of an infectious node making a susceptible neighbor infectious and $\lambda_{\max}$ is the highest eigenvalue for the adjacency matrix of the underlying network.

Researchers in the area of network science have extensively studied the ideal conditions for creating and selecting super-spreaders in various applications, such as information, virus, and fire propagation,  cf.~\cite{n2020rumor,zehmakan2019spread}. Some prior work \cite{paluch2018fast,mazza2010estimating} has attempted to address the difficulty of identifying and preventing the transmission of infections, viruses, and false information. They came to the conclusion that network density and spreading within communities are crucially associated. They observed that regardless of the size and configuration of the community, the inter-community edges play a fundamental role in the propagation of an epidemic.

In~\cite{upadhyay2022combating}, the authors used an age-structured social contact matrix which has components like - school, household, work, and others. The authors in~\cite{manriquez2021spread} have reconstructed the network from several databases and simulated the disease in the network. In~\cite{stegehuis2016epidemic}, the authors experimentally studied some dynamic processes such as bond percolation and the SIR model on different network structures. They observed that randomly shuffling the inter-community edges changes the process significantly. However, randomly distributing the edges inside each community does not change the process substantially.

\subsection{Control Strategies}

Social distancing, testing, quarantining, and vaccinations are some of the key strategies to control an epidemic. The network-based epidemic models allow the researchers to study the impact of network structure and evolution on disease spread and the effectiveness of interventions such as vaccination and social distancing~\cite{gross2006epidemic}. In~\cite{upadhyay2020age}, the authors designed age-group targeted testing strategies to identify the infected ones and quarantine them which was demonstrated to be very useful to reduce the total number of infections. Contact tracing can be highly effective in a heterogeneous network, isolating fewer nodes in total but preventing more cases as shown in~\cite{kojaku2021effectiveness}.

There have been numerous studies conducted to show the effectiveness of vaccination strategies as they generate herd immunity which essentially prevents the disease from spreading. In  ~\cite{zeng2005complexity}, the authors have shown how epidemics can be controlled and eliminated from the system by using impulsive vaccination, where the application of vaccine dose period is small compared to disease dynamics. In ~\cite{aruffo2022community}, the authors have shown that vaccine coverage of 60\% (assuming life-long immunity) along with strict measures are enough to prevent the re-emergence of the Covid-19 pandemic.  

Khansari et al. ~\cite{khansari2016centrality} compared the performance of several centrality-based and hybrid centrality-based algorithms like Betweenness-degree, Closeness-degree, and newly proposed algorithms like Katz-degree, and Radiality-degree on several random graphs models like Erd\H{o}s-R\'eyni model~\cite{erdHos1961strength}, Bar\'abasi-Albert model\cite{albert2002statistical}, and Watts-Strogatz model\cite{watts1998collective}. They evaluated the effectiveness of their method by measuring the largest connected component and how these measures reduce the largest eigenvalue of the adjacency matrix in the graph. Similarly, the authors of~\cite{petrizzelli2022beyond} showed that in different graph structures, how a vaccination strategy based on centrality measures is better than no vaccination or random vaccination.

\section{Model Description}\label{Model}

Let us first provide some basic graph definitions here.
\begin{definition}
$G=\left(V,E \right)$ is said to be an undirected graph, where  $V$ is a finite non-empty set of objects named nodes and $E$ is a collection of two-element subsets of $V$ called edges.   
\end{definition}
Let us define $n:=|V|$ and $m:=|E|$.
\begin{definition}
    For a node $v\in V$, $N\left(v\right):=\{v'\in V: \{v',v\} \in E\}$ is the \emph{neighborhood} of $v$.  Furthermore, $\hat{N}(v):=N(v)\cup \{v\}$ is the \textit{closed neighborhood} of $v$.
\end{definition}

\begin{definition}
\label{degree}
     Let $d\left(v\right):=|N\left(v\right)|$ be the \emph{degree} of $v$ in $G$. We also define $d_B(v):=|N(v)\cap B|$ for a set $B\subseteq V$.
\end{definition}

\begin{definition}
    The Adjacency matrix $A$ of an undirected graph $G$ is an $n\times n$ $0$-$1$  matrix whose columns and rows represent the nodes of the graph and edges between them are represented by the entries. A value of 1 in the matrix at position $(u, v)$ indicates that there is an edge between nodes $u$ and $v$, while a value of 0 indicates that there is no edge. The matrix $A$ is obviously symmetric.
\end{definition}

To model the temporal dynamics of the epidemic outbreak, we used a discrete-time Markovian compartmental model~\cite{kiss2015mathematics} to simulate the spread of a disease on real-world networks. Our model is a generalization of the original SIR~\cite{kermack1927contribution} which is arguably the most well-established epidemics model.
% The SIR model expresses the spread among a population of size $N$, for all time $t$, consists of a system of nonlinear ordinary differential equations.
The SIR model as its name suggests covers three compartments Susceptible(S), Infectious(I), and Recovered (R). We also consider two additional compartments of Dead and Vaccinated. In our model, we represent each individual through a node, so they can be in one of the following five states:
\begin{itemize}
    \item \textit{Susceptible}: A node that is not infectious, but may become infectious once in contact with an Infectious node.
    \item \textit{Infectious}: A node that is Infectious and is capable of transmitting the disease to Susceptible nodes.
    \item \textit{Recovered}: The nodes which have been Infectious and have recovered from the disease and are no longer susceptible to re-infection.
    \item \textit{Dead}: The nodes which have been Infectious and have died from the disease.
    \item \textit{Vaccinated}: A node that is not susceptible to disease due to prior immunity against the disease by vaccination. (We assume an Infectious node cannot be vaccinated.)
\end{itemize}

\textbf{Remark.} Note that the above definitions imply that once an individual is recovered/vaccinated, they do not become infected any longer. Most of our results would hold if we relax this assumption slightly, for example by allowing a recovered/vaccinated individual to become infected with 10\% of the original probability of infection. Studying the setup where the vaccines are not highly effective, or the recovered individuals can become infectious with a large probability are out of the scope of the present study.

In our graph-based SIR model, we consider a graph $G$, where each of the $n$ nodes represents an individual, and there is an edge if the corresponding two individuals are connected. A node (i.e., individual) at any given time can be in one of the five states. Then, in each discrete time round $t$, all nodes update their state following the updating rule imposed by the virus spread dynamics. 
% Additionally, we assume that at $t=0$ vaccination process has been done, and no individual has been dead or recovered at the starting.

Let $S(t)$, $I(t)$, $R(t)$, $D(t)$, and $VC(t)$ respectively denote the set of Susceptible, Infectious, Recovered, Dead, and Vaccinated nodes in the $t$-th round of the process. Furthermore, let $N_{S(t)}(v):=N(v)\cap S(t)$, $N_{I(t)}(v):=N(v)\cap I(t)$, $N_{R(t)}(v):=N(v)\cap R(t)$, $N_{D(t)}(v):=N(v)\cap D(t)$, and $N_{VC(t)}(v):=N(v)\cap VC(t)$ for a node $v\in V$.

Starting from an initial configuration, where each node is in one of the aforementioned five states, in each discrete-time round $t\in \mathbb{N}$, all nodes simultaneously update their state in the following manner, where the \textit{infection rate} $\beta$, \textit{recovery rate} $\gamma $, and the weight functions $\omega(\cdot)$, $\omega_i(\cdot)$, $\omega_r(\cdot)$, and $\omega_d(\cdot)$ are model's parameters and are explained below:
\begin{itemize}
    \item A susceptible node $v$ becomes infectious with probability
\begin{equation}
    \label{eq-1}
        \beta \times \omega_i(v)\times \frac{\sum_{u\in N_{I(t)}(v)}\omega(\{v,u\})}{\sum_{u\in N(v)} \omega(\{v,u\})}.
\end{equation}
    \item An infectious node $v$ switches to Dead with probability $\omega_d(v)$. If this does not happen, then it switches to recovered independently with probability $\gamma \; \omega_r(v)$. Otherwise, it remains Infectious.
    \item A recovered, dead, or vaccinated node's status remains unchanged.
\end{itemize}

\begin{definition}
The weight function $\omega: B\rightarrow [a,b]$ assigns a value between $a$ and $b$ to each element in $B$. We are particularly interested in the case where $B=E$ or $B=V$.
\end{definition}

In the above description of the model we relied on the following weight functions:
\begin{itemize}
    \item Infectious function: $\omega_i:V\rightarrow [0,1]$
    \item Recovery function: $\omega_r:V\rightarrow [0,1]$
    \item Death function: $\omega_d:V\rightarrow [0,0.1]$.
    \item Edge weight function: $\omega:E\rightarrow [0,1]$
\end{itemize}

% In an weighted undirected graph, the adjacency matrix $\mathscr{A}$ have the entry of edge-weight in place of 1, where a edge between $u$ and $v$ exist.
The weight $\omega_i(v)$ for a node $v$ indicates how susceptible the node is to the disease. Similarly, the weight $\omega_r(v)$ for a node $v$ indicates how well the node can recover once it is infected, and $\omega_d(v)$ for a node indicates how likely the node is going to die from the disease while being infected. So, each node $v$ has a \textit{death probability} $\omega_d(v)$. once it becomes infectious, it dies with probability $\omega_d(v)$  in each time step and survives (i.e., as usual, remains infectious and eventually recovers) with probability $1-\omega_d(v)$. These could be a function of different parameters, such as age or sex. The weight $\omega(e)$ for an edge $e=\{v,u\}$ models the probability of transmission between two nodes $v$ and $u$.

Our model is a generalization of the original SIR model~\cite{kendall1956deterministic}. If we set $G$ to be the complete graph $K_n$, let $\omega(e)=1$ for every edge $e$, and define $\omega_r(v)=1$,  $\omega_i(v)=1$, and $\omega_d(v)=0$ for  every node $v$, then the model is equivalent to the SIR model.

We observe that in our model, the process eventually reaches a configuration where there is no Infectious node, and thus no node will change its state anymore.

\section{ Problem Formulation and Inapproximability Result}
\label{hardness}
In this section, we first introduce the \textsc{Vaccination Problem}. Then, building on a reduction from the \textsc{Densest Subgraph Problem}, we prove in Theorem~\ref{hardness-thm} that our vaccination problem is NP-hard.\\

\noindent\textsc{Vaccination Problem}
\\
\textbf{Input}: A graph $G=(V_G,E_G)$, weight functions $\omega_i(v)$, $\omega_d(v)$, and $\omega_r(v)$ for every node $v\in V_G$ and $\omega(e)$ for every edge $e\in E_G$, the state of each node (e.g., Susceptible, Infectious, or Recovered), and integers $k,l'$. \\
\textbf{Output}: Is the maximum expected number of survived nodes at the end of the process equal to $l'$ if only $k$ nodes can be vaccinated? \\

\noindent\textsc{Densest Subgraph Problem}
\\
\textbf{Input}: A connected graph $H=(V_H,E_H)$ and two integers $k, l$. \\
\textbf{Output}: Is the maximum number of edges in a subgraph induced by $k$ nodes in $G$ equal to $l$. \\

\begin{theorem}[\cite{manurangsi2017almost}]
\label{dense-hardness}
The \textsc{Densest Subgraph Problem} is NP-hard.
\end{theorem}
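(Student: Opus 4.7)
The plan is to establish NP-hardness by a polynomial-time many-one reduction from the \textsc{Clique} problem (given a graph $G'$ and an integer $k$, does $G'$ contain a clique of size $k$?), which is classically NP-hard. The core observation is that any $k$-vertex induced subgraph has at most $\binom{k}{2}$ edges, with equality exactly when the chosen $k$ vertices form a clique. Hence asking whether the densest $k$-vertex subgraph has exactly $\binom{k}{2}$ edges is the same as asking whether a $k$-clique exists.

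Given an instance $(G',k)$ of \textsc{Clique}, I first have to produce a \emph{connected} graph $H$, as required in the problem statement. If $G'$ has connected components $C_1,\ldots,C_c$, I pick an arbitrary representative $u_i\in C_i$, introduce a fresh vertex $v^\star$, and add the edges $\{v^\star,u_i\}$ for $1\le i\le c$; the resulting graph $H$ is connected by construction. For $k\ge 3$, no new $k$-clique is created: any clique through $v^\star$ must lie inside $\hat{N}(v^\star)$, but the vertices of $\hat{N}(v^\star)\setminus\{v^\star\}$ belong to pairwise distinct components of $G'$ and therefore form an independent set. Consequently, $H$ contains a $k$-clique if and only if some $C_i$ does, which happens exactly when $G'$ itself does. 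The trivial cases $k\le 2$ can be decided directly in polynomial time. I then output the \textsc{Densest Subgraph} instance $(H,k,\binom{k}{2})$; by the observation in the previous paragraph, its answer is affirmative precisely when $(G',k)$ is a yes-instance of \textsc{Clique}. Since constructing $H$ and evaluating $\binom{k}{2}$ takes time polynomial in $|V(G')|+|E(G')|$, this is a valid polynomial-time many-one reduction and the theorem follows.

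The only real subtlety is reconciling the connectivity requirement in the problem statement with arbitrary \textsc{Clique} inputs, and the $v^\star$ gadget handles this without introducing spurious $k$-cliques for any $k\ge 3$. If one wanted the much stronger quantitative conclusions of~\cite{manurangsi2017almost} (almost-polynomial hardness of approximation for densest $k$-subgraph), a substantially more delicate construction building on variants of the PCP theorem would be required, but such machinery is not needed for the plain NP-hardness claimed here.
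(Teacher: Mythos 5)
Your reduction is correct. The key observation—that a $k$-vertex induced subgraph has at most $\binom{k}{2}$ edges with equality precisely when the $k$ vertices form a clique—makes the map $(G',k)\mapsto(H,k,\binom{k}{2})$ a valid many-one reduction from \textsc{Clique}, and it is compatible with the paper's ``equal to $l$'' formulation of the decision question since $\binom{k}{2}$ is an unconditional upper bound. The $v^\star$ gadget correctly handles the connectivity requirement: any clique through $v^\star$ is contained in its closed neighborhood, whose non-$v^\star$ vertices lie in pairwise distinct components and hence form an independent set, so no $k$-clique with $k\ge 3$ is created. The one place your argument differs from the paper is that the paper offers no proof at all for this theorem: it is stated as an imported black box with a citation to Manurangsi's work on the almost-polynomial inapproximability of Densest $k$-Subgraph, which is far stronger than what the paper actually uses and rests on PCP-type machinery. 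Your elementary, self-contained reduction is arguably the more appropriate justification for the plain NP-hardness claim that the paper's Theorem~\ref{hardness-thm} relies on; the citation route buys quantitative hardness-of-approximation guarantees that the paper never exploits. The only minor loose end is the degenerate case $k>|V_H|$, where no $k$-vertex induced subgraph exists; this should be dispatched explicitly (the answer is trivially NO for any $l\ge 1$), but it does not affect the validity of the reduction.
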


\begin{definition}[convertor]
\label{convertor}
We convert a given graph $H=(V_H,E_H)$, with $V_H=\{v_1,\cdots,v_{n_H}\}$ and $E_H:=\{e_1,\cdots,e_{m_H}\}$, to a graph $G=(V_G,E_G)$ and the weight functions $\omega(\cdot)$, $\omega_i(\cdot)$, $\omega_d(\cdot)$, and $\omega_r(\cdot)$ in the following way. We set
\begin{itemize}
    \item $V_G:= X\cup Y\cup Z$ for $X:=\{x\}$, $Y:=\{y_1,\cdots,y_{n_H}\}$, and $Z:=\{z_1,\cdots, z_{m_H}\}$.
    \item $E_G:=\{\{x,y_i\}: 1\le i\le n_H\}\cup \{\{y_i,z_j\}: v_i\in e_j, 1\le i\le n_H, 1\le j\le m_H\}$.
    \item $\omega(e)=1$ for $e\in E_G$.
    \item $\omega_i(v)=1$ for $v\in V_G$.
    \item $\omega_r(v)=0$ for $v\in V_G$.
    \item $\omega_d(v)=0$ for $v\in X\cup Y$ and $\omega_d(v)=1$ for $v\in Z$.
\end{itemize}
We observe that $n_G:=|V_G|=n_H+m_H+1$ and $m_G:=|E_G|=n_H+2m_H$. See Figure~\ref{figure-convertor} for an example.
\end{definition}
The convertor basically adds a node $y_i$ for each node $v_i$ in $H$ and a node $z_j$ for each edge $e_j$ in $H$. Then, if $z_j$ corresponds to $e_j=\{v_i,v_{i'}\}$, it adds an edge between $y_i$ (similarly $y_{i'}$) and $z_j$. Finally, it adds a node $x$ and connects it to every node $y_i$. 

\begin{figure}[h]
  \centering
  \includegraphics[width=1\linewidth]{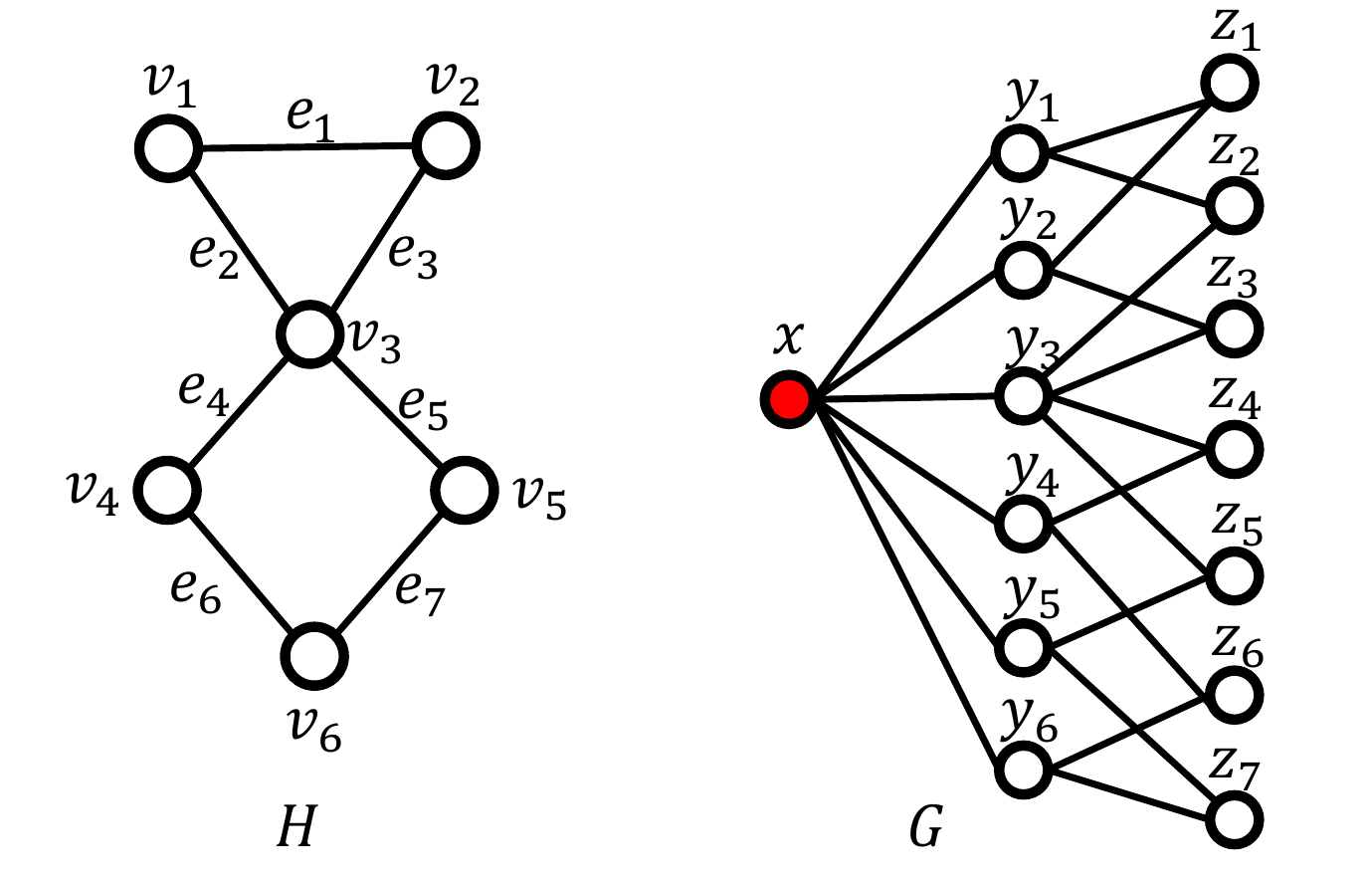}
  \caption{An example graph $H$ and the obtained graph $G$ after applying convertor in Definition~\ref{convertor}. In graph $G$, red and white nodes correspond to Infectious and Susceptible nodes, respectively.}
  \label{figure-convertor}
\end{figure}

\begin{definition}
The length of the shortest cycle (if any) in a graph $H$ is called the girth of $H$ and is denoted by $g(H)$.
\end{definition}
\begin{lemma}
\label{special-cases}
The \textsc{Densest Subgraph Problem} is polynomial time solvable in the following three cases:
\begin{itemize}
\item $k\ge n_H$.
\item $H$ is a tree (i.e., has no cycle).
\item $k< g(H)$.
\end{itemize}
\end{lemma}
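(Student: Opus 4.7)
The plan is to handle the three cases separately, each by exhibiting the optimum value in closed form; nothing deep happens in any case. For $k \ge n_H$, any subgraph induced by $k$ vertices must use $k$ distinct vertices of $H$, so only $k = n_H$ is meaningful, and in that case the induced subgraph is $H$ itself with $m_H$ edges. Thus the answer is immediate and can be decided in constant time.

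For the case in which $H$ is a tree, I would argue that any induced subgraph of $H$ is a subgraph of a tree, hence acyclic, so any subset of $k$ vertices induces a forest with at most $k - 1$ edges, with equality if and only if the induced subgraph is connected. To realize the bound $k - 1$, it suffices to exhibit any connected induced subgraph on $k$ vertices. Because $H$ is connected, running a BFS from an arbitrary root and taking the first $k$ vertices visited yields such a set: the parent edges of these vertices in the BFS tree lie entirely within the selected set and form a spanning tree of it. Hence the optimum equals $k - 1$.

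For $k < g(H)$, the same argument applies almost verbatim. Any set $S$ of $k$ vertices induces a subgraph that contains no cycle, because every cycle of $H$ has at least $g(H) > k = |S|$ vertices and therefore cannot fit inside $S$. Hence the induced subgraph is a forest with at most $k - 1$ edges, and the BFS construction above, which relies only on connectedness of $H$, again attains this bound. So the optimum is $k - 1$.

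In all three cases the optimum is a trivial function of the input parameters and can be computed, and then compared against the threshold $l$, in polynomial (in fact linear) time, which establishes the lemma. The only step requiring a moment of care is the feasibility argument in the last two cases, i.e., producing a connected induced $k$-vertex subgraph of $H$, but the BFS observation makes this immediate, so I do not anticipate any real obstacle.
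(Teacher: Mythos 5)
Your proof is correct and follows essentially the same route as the paper's: the $k\ge n_H$ case is immediate, and in the other two cases the optimum is shown to be exactly $k-1$ by bounding induced subgraphs by acyclicity and realizing the bound with a connected $k$-vertex set. The only difference is that you make the feasibility step explicit via the BFS construction, which the paper leaves implicit when it asserts that a connected induced subgraph on $k$ nodes with $k-1$ edges exists.
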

\begin{proof}
If $k\ge n_H$, then all nodes can be selected, which will induce a subgraph with $m_H$ edges. Thus, the answer is YES when $l=m_H$ and NO otherwise.

If $H$ is a tree, then no subgraph induced with $k$ nodes can have more than $k-1$ edges because otherwise there exists a cycle that is in contradiction with the definition of a tree. Furthermore, any connected subgraph with $k$ nodes contains $k-1$ edges. Thus, if $l=k-1$, the answer is YES and NO otherwise.

In the third case also no induced subgraph on $k<g(h)$ nodes can have more than $k-1$ edges because otherwise, it contains a cycle of length $k$ or smaller, which is in contradiction with $k<g(H)$. Furthermore, any connected subgraph with $k$ nodes has $k-1$ edges. Thus, if $l=k-1$, the answer is YES and NO otherwise.
\end{proof}

\begin{theorem}
\label{hardness-thm}
The \textsc{Vaccination Problem} is NP-hard.
\end{theorem}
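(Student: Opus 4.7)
The plan is to give a polynomial-time reduction from the \textsc{Densest Subgraph Problem} (which is NP-hard by Theorem~\ref{dense-hardness}) to the \textsc{Vaccination Problem}. Given a Densest Subgraph instance $(H,k,l)$, I would apply the convertor of Definition~\ref{convertor} to obtain $G$, then augment it by attaching, for each $v_i\in V_H$, a private leaf $z_i^{*}$ adjacent to $y_i$ with $\omega_d(z_i^{*})=1$, $\omega_i(z_i^{*})=1$, $\omega_r(z_i^{*})=0$, and $\omega(\{y_i,z_i^{*}\})=1$; the rest of the graph and the weight functions are as in Definition~\ref{convertor}. The initial configuration declares $x$ to be Infectious and every other node (including the leaves) to be Susceptible, and the model constants are fixed to $\beta=\gamma=1$. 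The leaves $z_i^{*}$ act as a technical device that credits each vaccinated $y_i$ with one automatic ``bonus'' survivor, neutralising the trade-off between vaccinating a $Y$-node and directly vaccinating a $Z$-node that otherwise breaks a naive reduction.

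I would next argue that the chosen weights force an essentially deterministic long-run outcome. Since $\omega_d\equiv 0$ on $X\cup Y$ and $\omega_r\equiv 0$ on all of $V_G$, node $x$ remains infectious forever, and any non-vaccinated $y_i$ is infected in each round with probability at least $\beta/(d(y_i)+1)>0$, so with probability one it eventually becomes infectious. Consequently, in the $t\to\infty$ limit every node in $X\cup Y$ surely survives (since $\omega_d=0$), and each $z$-type node (a $z_j$ or a leaf $z_i^{*}$) surely survives if and only if it is vaccinated or every one of its $y$-neighbours is vaccinated, and otherwise surely dies. Every per-node survival probability thus converges to a value in $\{0,1\}$, so the expected survivor count stabilises at an integer and the intended meaning of ``end of the process'' is unambiguous.

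A direct counting argument then shows that, for any budget split $k=k_Y+k_{Z^{*}}+k_Z$ (with double counting avoided in the optimum), the total number of surviving $z$-type nodes equals $k+e(V_Y)$, where $V_Y$ is the vaccinated subset of $Y$ and $e(V_Y)$ is the number of edges of $H$ induced on the corresponding subset of $V_H$. Maximising over strategies therefore amounts to choosing $V_Y$ to induce the densest $k$-vertex subgraph of $H$, so the maximum expected number of survivors equals $(n_H+1)+k+d_k$, where $d_k$ is the maximum number of edges in a $k$-vertex induced subgraph of $H$. Setting $l':=(n_H+1)+k+l$ then yields the desired equivalence---$d_k=l$ iff the constructed vaccination instance has maximum expected survivors equal to $l'$---completing the reduction in polynomial time, since $|V_G|=O(n_H+m_H)$. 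The step I expect to require the most care is the overlap accounting in this counting argument (showing that $k+e(V_Y)$ safe $z$-type nodes are simultaneously attainable for the optimal $V_Y$); this becomes routine once Lemma~\ref{special-cases} is used to dispose of the trivial regime $k\ge n_H$ in advance.
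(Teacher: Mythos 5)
Your reduction is correct and shares the paper's skeleton (reduce from \textsc{Densest Subgraph} via the convertor of Definition~\ref{convertor}, with $x$ Infectious, $\omega_r\equiv 0$ so infected $Y$-nodes spread forever, and $\omega_d=1$ on the $z$-type nodes so that a $z$-node survives iff it or all of its $Y$-neighbours are vaccinated), but the way you establish the key identity is genuinely different. The paper proves $OPT_{G,k}=OPT_{H,k}+n_H+1$ on the unmodified convertor output, and the upper-bound direction there requires a local exchange argument: given an optimal vaccination set that touches $Z$, one swaps a vaccinated $z$ for a boundary node $y_s$ without losing survivors, and to make that swap well-founded the paper must first show $OPT_{H,k}\ge k$, which forces it to exclude not only $k\ge n_H$ but also the cases where $H$ is a tree or $k<g(H)$ (all three handled by Lemma~\ref{special-cases}). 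Your private-leaf gadget $z_i^{*}$ sidesteps the exchange entirely: since every vaccinated node of any type accounts for at most one ``directly saved'' $z$-type node, a union bound gives at most $|S_{Z^*}|+|S_Z|+|V_Y|+e(V_Y)=k+e(V_Y)\le k+d_k$ surviving $z$-type nodes, with equality attained by vaccinating exactly the $Y$-copies of a densest $k$-subset, so the optimum is $(n_H+1)+k+d_k$ and only the regime $k\ge n_H$ needs to be excluded. The trade-off is cosmetic versus structural: the paper's construction is smaller and its target value simpler, while yours adds $n_H$ leaves but replaces the most delicate step of the paper's proof with a monotone counting argument and fewer case exclusions. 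You are also more careful than the paper about what ``end of the process'' means here (the process never reaches a state with no Infectious nodes since $\omega_r\equiv 0$ on $X\cup Y$), observing that each node's survival probability converges to $0$ or $1$. The only point to tighten in a full write-up is the one you already flag: the no-double-counting bookkeeping when $S$ overlaps nodes already saved indirectly, which your union bound in fact handles since such overlaps only decrease the count.
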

\begin{proof}
The idea is to apply a polynomial time reduction from the \textsc{Densest Subgraph Problem} to the \textsc{Vaccination Problem} and then use the hardness results from Theorem~\ref{dense-hardness}.

Let $H=(V_H,E_H)$ and integers $k,l$ be the input of the \textsc{Densest Subgraph Problem}, which do not satisfy any of the cases in Lemma~\ref{special-cases}. Define $OPT_{H,k}$ to be the maximum number of edges in a subgraph induced by $k$ nodes in $H$.  Then, we construct an instance of the \textsc{Vaccination Problem} using the converter in Definition~\ref{convertor} and let node $x$ be Infectious and all other nodes be Susceptible. (See Figure~\ref{figure-convertor}.) Define $OPT_{G,k}$ be the maximum expected number of alive nodes at the end of the process if we can vaccinate only $k$ nodes.

\textbf{Claim 1.} In the above set-up, $OPT_{G,k}=OPT_{H,k}+n_H+1$.

Let $\mathcal{A}$ be a polynomial time algorithm for the \textsc{Vaccination Problem}. Then, we claim that there is a polynomial time algorithm for the \textsc{Densest Subgraph Problem}. Note that if the input of \textsc{Densest Subgraph Problem} satisfies one of the cases in Lemma~\ref{special-cases}, then we can efficiently solve the problem. Otherwise, we construct an instance of the \textsc{Vaccination Problem} using the converter in Definition~\ref{convertor} as described above. Then, according to Claim 1, the answer to the \textsc{Densest Subgraph Problem} is YES if and only if the answer to the \textsc{Vaccination Problem} is YES for the constructed instance and $k$ and $l'=l+n_H+1$. Note that this would give a polynomial time solution to the \textsc{Densest Subgraph Problem} since the convertor clearly takes polynomial time in the input of the \textsc{Densest Subgraph Problem} and also generates an instance of polynomial time size.

It only remains to prove Claim 1. We observe that all nodes in $X\cup Y$ will never die since $\omega_d$ is equal to 0 for all these nodes. Furthermore, all these nodes, if not vaccinated, eventually become Infectious ($x$ is already Infectious from the beginning) and remain Infectious because $\omega_r(x)=0$. A node in $Z$ never becomes Infectious if it is vaccinated or both of its neighbors in $Y$ are vaccinated (recall that each node in $Z$ has exactly two neighbors in $Y$).

We first prove that $OPT_{G,k}\ge OPT_{H,k}+n_H+1$. Consider a set $S$ of $k$ nodes in $H$, which induces a subgraph with $OPT_{H,k}$ edges. Let us vaccinate the set $S':=\{y_i:v_i\in S, 1\le i\le n_H\}$. By construction, for each edge $e_j$ whose endpoints are in $S$, there is a node $z_j$ in $G$ whose both neighbors are in $S'$. Thus, all such nodes $z_j$ would never become Infectious. Furthermore, all $n_H+1$ nodes in $X\cup Y$ never die. Thus, in the end, there will be at least $OPT_{H,k}+n_H+1$ nodes alive.

Now, we prove that $OPT_{G,k}\le OPT_{H,k}+n_H+1$. Let $S$ be a node set of size $k$ in $G$ such that if $S$ is vaccinated $OPT_{G,k}$ nodes will survive. We claim that we can transform $S$ to a set $S'$ of the same size such that $S'\cap Z=S'\cap X=\emptyset$. Since the only node in $X$ (i.e., node $x$) is Infectious, it cannot be vaccinated. Define $S_Y:=S\cap Y$ and $S_Z=S\cap Z$. There must be a set $D\subset S_Y$ such that the nodes in $\{v_i:y_i\in S_Y\}$ induce a subgraph with at least $|S_Y|$ edges in $H$ because otherwise, the number of nodes that survive is at most $|X|+|Y|+|S_Y|-1+|S_Z|=1+n_H+k-1=n_H+k$. Recall that we proved $OPT_{G,k}\ge OPT_{H,k}+n_H+1$ and since $H$ has a cycle of length $k$ or smaller (note we excluded the case of $k<g(H)$ or $H$ being a tree), we have $OPT_{H,k}\ge k$ (any component including a cycle of length $g(H)\le k$ has at least $k$ edges). This implies that $OPT_{G,k}\ge k+n_H+1$, which results in a contradiction. Therefore, such a set $D$ must exist. 
Let $y_s$ be a node such that $y_s$ has a neighbor in $\{v_i:y_i\in D\}$ but is not in $D$ (such node must exist since $H$ is connected and $|D|< n_H$. The latter is true because $|D|\le |S|=k$ and we excluded the case of $k\ge n_H$). Let $z$ be a node in $S_Z$. If we vaccinate $y_s$ instead of $z$, still at least as many nodes will survive. This is because vaccinating $z$ will only save node $z$, and vaccinating $y_s$ will at least save a node $z'$ which is adjacent to $y_s$ (and maybe even more nodes). (Note that if $z'$ is vaccinated, we would have chosen $z$ to be $z'$). So far we proved there is a set of vaccinated $S'$ such that $S'\cap (X\cup Z)=\emptyset$ and will result in the survival of $OPT_{G,k}$ nodes. Since $n_H+1$ nodes in $X\cup Y$ will survive anyway. This means there are $OPT_{k,G}-n_H-1$ nodes in $Z$ whose both neighbors in $Y$ are vaccinated. By construction, the node set $\{v_i:y_i\in S', 1\le i\le n_H\}$ induces a subgraph with $OPT_{k,G}-n_H-1$ edges in $H$. This implies that $OPT_{H,k}\ge OPT_{k,G}-n_H-1$ which is equivalent to $OPT_{H,k}+n+1\ge OPT_{k,G}$. This finishes the proof.
\end{proof}

\section{Experimental Setup}\label{Experiment}

We have taken $\omega_i(v)$, $\omega_d(v)$ and $\omega_r(v)$ from a uniform random distribution, as we have not made any assumption about the recovery, infection, and death rate distributions which are a function of the studied diseases. As will be discussed, our experiments consistently support certain patterns and observations, regardless of the random choices, which makes them relevant to most setups. However, it would be interesting to study settings tailored for a particular type of disease on our model in future work. Moreover, the value of $\beta$ is set to 2, and $\gamma$ is set to 0.6 in our simulations. These factors are also dependent on the disease's type~\cite{etemad2022some, upadhyay2022combating} and different values have been utilized in various setups~\cite{silva2020covid, Sarkar2020modeling}. It is worth emphasizing that the transmission and recovery rates in our model are conceptually slightly different from previous models. More precisely, the effect of transmission and recovery rate is reduced as both parameters ($\beta$ and $\gamma$)  are multiplied by factors that lie between 0 and 1. Thus, we have chosen the values of these parameters to be aligned with the prior work, but taking the above observation into account.

For our experiments, we utilize both real-world graph data and synthetic graph models. For real-world networks, we rely on publically available data from SNAP~\cite{leskovec2012learning}. In particular, we ran our simulations on the Facebook dataset and the Twitter dataset of which some description is given below.

\begin{itemize}
    \item \textbf{Facebook} is a platform for social networking. All user-to-user connections from the Facebook network are represented in an undirected graph. An edge between two nodes $v$ and $u$ indicates that the corresponding individuals are friends. The network has $4039$ nodes and $88234$ edges. 
    \item \textbf{Twitter} is a social networking and microblogging service.  Users upload and engage with messages known as ``tweets'' on this platform. In a graph, a directed edge denotes a following relationship; for example, an edge from node $v$ to node $u$ suggests that user $v$ follows user $u$. We have converted this graph to an undirected one using the rule: an edge $e$ exists between $u$ and $v$ if there is an edge from node $v$ to node $u$ or node $u$ to node $v$. It has $81306$ nodes and $1299314$ edges. 
\end{itemize}

Most real-world networks are unweighted and one needs to introduce a meaningful procedure for weight assignment. Using the communication information of individuals on various real-world networks, the authors in~\cite{onnela2007structure} and~\cite{goyal2010learning} observed that there is a strong correlation between the number of shared friends of two individuals and their level of communication. Consequently, they proposed the usage of similarity measures, such as Jaccard-like parameters, to approximate the weights of connections between nodes. This is also aligned with the well-studied strength of weak ties hypothesis~\cite{granovetter1973strength}.
This line of research has inspired the choice of the Jaccard index in our model. Therefore, we assign the edge weights according to the Jaccard index~\cite{jaccard1901etude} in our set-up. More precisely, we set
\begin{equation}\label{eq-jaccard}
    \omega\left(\{v,u\}\right)=\frac{|\hat{N}(v)\cap \hat{N}(u)|}{N(v)\cup N(v)}.
\end{equation}
We use $|\hat{N}(v)\cap \hat{N}(u)|$ instead of $|N(v)\cap N(u)|$ in the numerator to ensure that the weight of an edge is never equal to zero.

We should emphasize that the graph data used from online social platforms do not perfectly match our use case since it is possible that two individuals are connected over an online social platform such as Facebook, but they never interact physically (and thus cannot infect each other). However, this choice can be justified by the following three reasons:
\begin{itemize}
    \item Firstly, the graph data from online social platforms are available in abundance while the graph data from physical connections between people are much more scarce. Using online social network data permits us to conduct a much more extensive and comprehensive set of experiments.
    \item It is known, cf.~\cite{costa2007characterization}, the real-world social networks regardless of their context (for example, the online social networks between people in a particular city, the physical interaction network between individuals in a certain profession, or the interest networks between fans of a particular movie genre) all share certain graph characteristics such as small diameter, scale-free degree distribution, and large clustering coefficient. Thus, while the graph data used does not match the real-world physical interactions perfectly, it still should be a very good approximation since it possess all such desired properties. This is also supported by our experiments on synthetic graph data that we explain later in this section.
    \item We assign the weights of the edges in the graph according to the Jaccard index. Consider two individuals which are connected online, but would never meet physically. In such scenarios, the two individuals perhaps are not in the small circle of friends and do not share many friends. Thus, the edge between them receives a small weight according to the Jaccard index and consequently, they are unlikely to interact in our virus-spreading process.
\end{itemize}

Another point that is worth stressing is that, of course, the Facebook and Twitter graphs used are a subgraph of the whole network. The graph data for the whole network (or even a large part of it) are not made available due to privacy reasons. Furthermore, it would not be computationally feasible to experiments on the full network even if available.

Different synthetic random graph models
have been proposed to mimic real-world social networks, cf.~\cite{costa2007characterization}.
Such models are usually tailored to possess fundamental properties consistently observed in real-world networks, such as small diameter and power-law degree distribution. We rely on the very well-established Hyperbolic Random Graph (HRG), which is a random graph model that generates complex networks with hyperbolic geometry. Nodes are embedded in hyperbolic space in the HRG model\cite{gugelmann2012random}. According to the HRG model, nodes are drawn near one another depending on their proximity to one another in geometric terms in the hyperbolic space.
% It has been demonstrated that the HRG model can produce networks with a variety of intriguing characteristics, such as a hierarchical community structure, a power-law degree distribution, and significant clustering mimicking social networks. 

We have generated HRGs such that the number of nodes and edges match with the experimented real-world networks (namely Facebook and Twitter graphs from above) using the Networkit Python Package~\cite{staudt2016networkit}.

To generate HRG, in addition to the number of nodes and edges, one needs to provide the exponent of the power-law degree distribution $b$ and the temperature $T$ as the input parameters. It is known that for HRG clustering is maximized at $T=0$, minimized at $T = \infty$, and goes through a phase transition at $T=1$, such that for $T < 1$ the graph exhibits clustering behavior whereas for $T>1$ the clustering goes to 0~\cite{krioukov2009curvature_INTERNET_TEMP}. Krioukov et al.~\cite{krioukov2009curvature_INTERNET_TEMP} demonstrated that if we embed the internet graph into hyperbolic geometry it has temperature $T=0.6$. Therefore, we also set $T=0.6$ in our setup. Moreover, we let $b = 2.5$, as it has been empirically observed that in social networks $2 \leq b \leq 3$~\cite{albert2002statistical}.

For both Facebook (and Facebook HRG) and Twitter (and Twitter HRG), we assume that initially roughly $0.5\%$ of nodes are infectious; more precisely, $20$ randomly selected nodes in the case of Facebook and $400$ randomly selected nodes in the case of Twitter. These numbers are chosen such that ensure that the virus almost surely spreads to a large part of the network; otherwise, it is not very meaningful to inject a vaccination strategy. Furthermore, for the Facebook graph (and HRG with Facebook parameters) we ran our experiments $100$ times and used the average outcome and for the Twitter graph (and HRG with Twitter parameters), $10$ repetitions were used due to its larger size. All the experiments were implemented using Python 3 and NetworkX library~\cite{hagberg2008exploring}.  

\section{Vaccination}\label{Vaccination}

The problem of finding efficient and effective vaccination strategies is very challenging. As we proved in Section~\ref{hardness}, we cannot hope to obtain a polynomial time optimal algorithm for the Vaccination Problem. Thus, we resort to approximation and heuristic approaches, as most of prior work~\cite{khansari2016centrality,erdHos1961strength}. In this section, we describe a large set of algorithms (some inspired by the classical centrality-based algorithms and some designed by us according to the virus spreading model) are presented.

We assume that we are given the budget to vaccinate up to $\alpha$ percentage of the population (or equivalently $k=\lfloor \alpha n\rfloor$ individuals) and the ultimate goal is to maximize the expected number of people alive at the end of the spread.

A natural approach is to use standard algorithms used to select the most ``influential'' nodes in a graph such as the highest degree, highest closeness, highest betweenness, cf.~\cite{petrizzelli2022beyond}. We can also consider the weighted version of these algorithms since our graph is weighted. Along with that in order to minimize the death, we have tried algorithms that consider vaccination according to higher death rates or higher death rates of neighbors. Furthermore, we propose three algorithms that rely on different formulations of the recovery, infection, and death rates. Finally, we suggest a hybrid algorithm that combines centrality measures along with disease spread parameters.

In our experiments, the $\alpha$ percentage of nodes with the highest score, according to some scoring mechanism, are vaccinated (see Algorithm~\ref{alg}). Thus, below, we simply need to define what score function is used in each strategy. For example, in Degree algorithm, the score of a node is its degree. We should emphasize that none of our algorithms assumes any knowledge of the state of the network (i.e., which nodes are Infectious/Recovered/Dead). In other words, all algorithms are ``source-agnostic''. 

\begin{algorithm}[H]
\label{alg}
\caption{Find Nodes to Vaccinate}
\label{algo:get_top_alpha_fraction}
\begin{algorithmic}[1]
\State Calculate \textsc{Score}$(v)$ for each node $v$. 
\State Sort all nodes according to \textsc{Score} in descending order.
\State Find the list $L$ of $\lfloor \alpha n\rfloor$ nodes with the highest \textsc{Score}.
\State \textbf{return} $L$.

% \Function{Top Nodes}{$\alpha$, $score(v)$}
% \State Sort all nodes in descending order based on their scores. 
% \State Calculate the number of top nodes to select:
% \State $\quad$ count= $\lceil \alpha *$ length of nodes with scores $\rceil$ 
% \State Create an empty list called top $\alpha$ nodes \label{algo:create_list}
% \For{$i$ in range $0$ to count - 1}
% \State Add $i$ to top $\alpha$ nodes 
% \EndFor
% \State \textbf{return} top $\alpha$ nodes
% \EndFunction
\end{algorithmic}
\end{algorithm}

All the $16$ vaccination strategies, that have been put to test to maximize the final number of alive nodes are listed below along with their description:

\begin{enumerate}
    \item \textbf{Random}: This algorithm randomly chooses nodes for vaccination (i.e., assigns a random score to each node). 
    \item \textbf{Degree}: It measures the number of nodes to which that node is connected (as defined in Definition~\ref{degree}): \[d\left(v\right):=|N\left(v\right)|.\]

    (For example, in this algorithm, \textsc{Score}$(v)=d(v)$ and the $\lfloor \alpha n\rfloor$ nodes with the highest degree are vaccinated.)
    
    \item \textbf{Weighted Degree}: For a node $v$, we measure the sum of the weights of all its adjacent edges.  \[wd (v) := \sum_{u\in N(v)} \omega(\{v,u\}).\]
    
    \item \textbf{Eigenvector}: A node's significance in a network can be determined by looking at how it is connected to the other significant nodes in the network. In other words, the sum of the centralities of a node's neighbors defines its centrality. More precisely, given the adjacency matrix $A$ of a graph $G=(V,E)$, we define
    \[x(v) := \frac{1}{\lambda} \sum_{u\in V} A_{u,v} x(u)\]
    Then, $X=[x(v_1),x(v_2),...,x(v_n)]^T$ is the solution of the equation $AX = \lambda X$ and the $i$-th component of $X$ will give eigenvector centrality score of node $v_i$.
    
    \item \textbf{Weighted Eigenvector}: Here, the significance is weighted by edge-weight, given by the function $\omega$. To that end, we use the weighted adjacency matrix which is the same as the original one except that we use the weight $\omega(\{v,u\})$ when there is an edge between $v$ and $u$ instead of $1$.
    
    \item \textbf{Closeness}: Based on the notion that a node is important if it is close to other nodes in the network, closeness centrality is a measure of a node's relevance in a network. The inverse of the sum of the shortest distances between a node and every other node in the network is then used to establish a node's centrality. Formally, the closeness centrality of a node is given by:
\[c(v) := \frac{n-1}{\sum_{u\neq v} d(u,v)}\]
where $d(u,v)$ is the length of the shortest path between $u$ and $v$, disregarding the weights.
    
    \item \textbf{Weighted Closeness}: In the weighted closeness, the distance between two nodes is adjusted with proper weights. As higher edge weights imply two nodes being closer, we have negated the edge weights from 1 to compute the weighted closeness centrality score. So, while measuring $d(u,v)$ in place of all the edges being 1, the new weights are taken. 

    \item \textbf{Betweenness}: Betweenness centrality is a measure of a node's importance in a network based on the premise that a node is important if it lies on many shortest paths between other nodes in the network. The number of overlaps with the shortest paths between pairs of nodes is then used to determine how central a node is:
\[b(v) := \sum_{s \neq v \neq u} \frac{\sigma_{su}(v)}{\sigma_{su}}\]
where $\sigma_{su}$ the total number of shortest paths from node $s$ to node $u$ and $\sigma_{su}(v)$ is the number of those paths that pass through $v$.

    \item \textbf{Weighted Betweenness}: Similarly, in the case of weighted betweenness centrality we have considered the weighted shortest paths where the weight of an edge $\{v,u\}$ is set to $1-\omega(\{v,u\})$.
    \item \textbf{Death}: The score of a node $v$ is equal to $\omega_d(v)$. Thus, nodes with the highest death rate are vaccinated.
    \item \textbf{Neighbors' Death}: If a particular node $v$ is infected then Susceptible nodes in its surroundings (i.e, $u \in N_{S(t)}(v)$) have a chance of becoming Infectious and subsequently would die according to their death probability. Hence, we try to identify the nodes whose neighbors have a higher death rate. So, we pick nodes with the highest value of
\[
nd (v) := \sum_{u\in N(v)} \omega_d(u).
\]
    \item \textbf{Weighted Neighbors' Death}: In this algorithm, we have modified the aforementioned algorithm with the weights on the edges between two nodes as they determine the probability of disease transmission. So, we use
\[
wnd (v) := \sum_{u\in N(v)} (\omega(\{v,u\}) \cdot \omega_d(u)).
\]

\item \textbf{Expected Fatality 1}: For a node $v$ let us define the \textit{expected fatality 1} of node $v$ to be
\[
ef_1(v) := \sum_{u\in N(v)}\frac{\omega(\{v,u\})\cdot \omega_d(u)}{\sum_{w\in N(u)}\omega(\{w,u\})}+\omega_d(v)
\]
Recall from Equation~(\ref{eq-1}) that the probability of a Susceptible node $u$ becoming Infectious is proportional to 
$$\frac{\sum_{v\in N_{I(t)}(u)}\omega(\{u,v\})}{\sum_{v\in N(u)} \omega(\{u,v\})}.$$

Thus, the contribution of a node $v$ to each neighbor $u$'s infection probability is in the form $\frac{\omega(\{v,u\})}{\sum_{w\in N(u)}\omega(\{w,u\})}$. We multiply that with the death probability of $\omega_d(u)$. Furthermore, we add the death probability of node $v$ itself to the sum as well. Overall, $ef_1(v)$ is meant to account for the expected death node $v$ that could potentially cause in its closed neighborhood once it is Infectious.

\item \textbf{Expected Fatality 2}: For a node $v$ \textit{Expected Fatality 2} is defined to be:
\begin{align*}
    ef_2(v)  := \sum_{u\in N(v)}\frac{\omega(\{v,u\})\cdot \omega_d(u)}{\sum_{w\in N(u)}\omega(\{w,u\})}\\ +1-\omega_d(v)-\gamma \cdot \omega_r(v).
\end{align*}
This is the same as $ef_1(v)$, but we replace $\omega_d(v)$ by $1-\omega_d(v)-\gamma \cdot \omega_r(v)$, which is the probability that node $v$ remains Infectious. This might be relevant since the longer it remains Infectious (without becoming recovered/dead), it could potentially infect/kill more nodes. 

\item \textbf{Expected Fatality 3}: We define \textit{Expected Fatality 3} of node $v$ to be :
\begin{align*}
 ef_3(v) & := \\ &\sum_{u\in N(v)}\frac{\omega(\{v,u\})\cdot \omega_d(u)\cdot \omega_i(u) \cdot (1-\omega_d(v))}{\sum_{w\in N(u)}\omega(\{w,u\})}.
\end{align*}

This is again conceptually similar to $ef_1(v)$; however, we multiply by $\omega_i(u)\cdot (1-\omega_d(v))$. The probability $\omega_i(u)$ accounts for neighbor $u$ actually becoming Infectious. The probability $1-\omega_d(v)$ emphasizes the importance of the spreader node $v$ not dying and continuing to spread.

\item \textbf{ Hybrid Algorithm}: In this algorithm, we give importance to network structure as well as the virus spreading model parameters. We rank nodes according to their rankings in the Betweenness score, $b(v)$, and Expected Fatality 3, $ef_3(v)$, and give both scores equal weights.   
\end{enumerate}

\section{Simulation Results}\label{Results}

For each algorithm, after vaccinating the nodes, we have run the epidemic simulation and recorded the final number of deaths. We have found that some vaccination strategies like Random, Eigenvector, and Death do not perform well in any of the setups. On the other hand, Betweenness, Expected Fatality 3, and Hybrid Algorithm perform well in most of the scenarios. These results are summarized in Figures~\ref{fig:fb} and~\ref{fig:tw}.
% We also compare the scenarios with "no vaccination". Any vaccination strategies at $5\%$ performed better than this.

\begin{definition}
    For a graph $G$, the survival ratio is defined as the proportion of the nodes that have not died during the epidemic out of the number of initial nodes in the graph. 
\end{definition}
For each of the studied networks, the performance of different vaccination policies is discussed below in more detail:

\begin{itemize}
    \item \textbf{Facebook}: When no vaccination is applied and we let the disease spread we found the survival ratio to be 0.817. From Figure~\ref{fig:fb}, we can see that, in $5\%-10\%$ vaccination, Closeness, Betweenness, and Weighted Betweenness performed very well. From $15\%-60\%$ vaccination, the Hybrid Algorithm is the best performer. Also after $20\%$ vaccination Expected Fatality 2 performed very well. After $50\%$ vaccination we found Degree, Betweenness, Weighted Betweenness, and Expected Fatality 3 to perform well. Throughout, the Eigenvector, Weighted Eigenvector, and Random performed poorly. The performance of the Death algorithm increased significantly towards higher vaccination percentages. 
    
    \item \textbf{Facebook HRG}: In case of no vaccination, we found the survival ratio to be 0.877. In the $5\%-10\%$ vaccination range, Betweenness and Weighted Betweenness performed very well. The hybrid algorithm performed extremely well from $15\%$ vaccination. The performance of Closeness improves significantly after $35\%$ vaccination, while Expected Fatality 2 also starts to perform better after $40\%$ vaccination. Towards $55\%-60\%$ vaccination, almost every algorithm performs very well except Random, Weighted Eigenvector, and Death.
    
 \begin{figure*}[ht]
    \centering
    \includegraphics[scale=0.65]{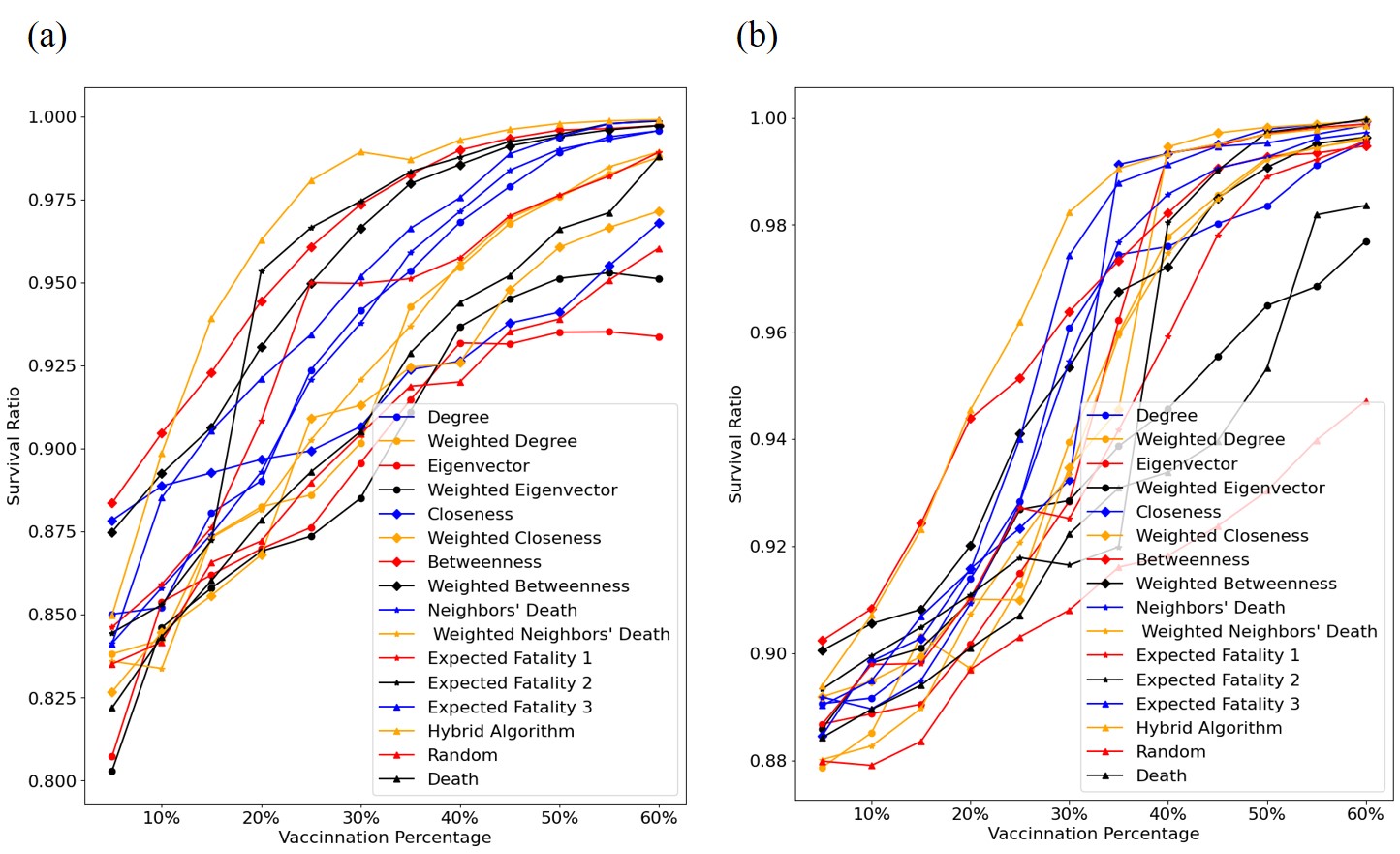}
    \caption{Survival Ratio (along the y-axis) for different percentages of vaccination (along the x-axis) according to sixteen different algorithms in (a) Facebook and (b) Facebook HRG network.}
    \label{fig:fb}
\end{figure*}   

    \item \textbf{Twitter}: Without any vaccination, 76.8\% nodes survived in the epidemic. The hybrid algorithm outperformed other algorithms in lower vaccine percentages as well as in higher vaccination percentages. From 40\% we find Degree, Weighted Degree, Betweenness, Weighted Betweenness, Neighbors' Death, Weighted Neighbors' Death, Expected Fatality 1, Expected Fatality 2, Expected Fatality 3, and Hybrid Algorithm to perform almost equal to each other with a very high survival ratio. But, the Random and Death algorithms did not perform well.  
    
    \item \textbf{Twitter HRG}: 66.4 \% of nodes survived the epidemic when there was no vaccination. Initially, in the 5\%-10\% range Hybrid algorithm, Neighbors' Death and Betweenness performed very well. Although the Hybrid Algorithm is not consistently the best performer across all vaccine percentages, it performed very well and ranked among the top 3 algorithms in terms of survival ratio. We observe that Weighted Degree performs best in the 20\%-30\% vaccination range. Onwards, 40\% vaccination range, we see Degree, Weighted Degree, Weighted Eigenvector, Betweenness, Neighbors' Death, Weighted Neighbors' Death, Expected Fatality 2, Expected Fatality 3, and Hybrid algorithm to perform almost similarly. On the other hand, the algorithms like Random, Eigenvector, and Death performed very badly. 
    
\end{itemize}

\begin{figure*}[ht]
    \centering
    \includegraphics[scale=0.65]{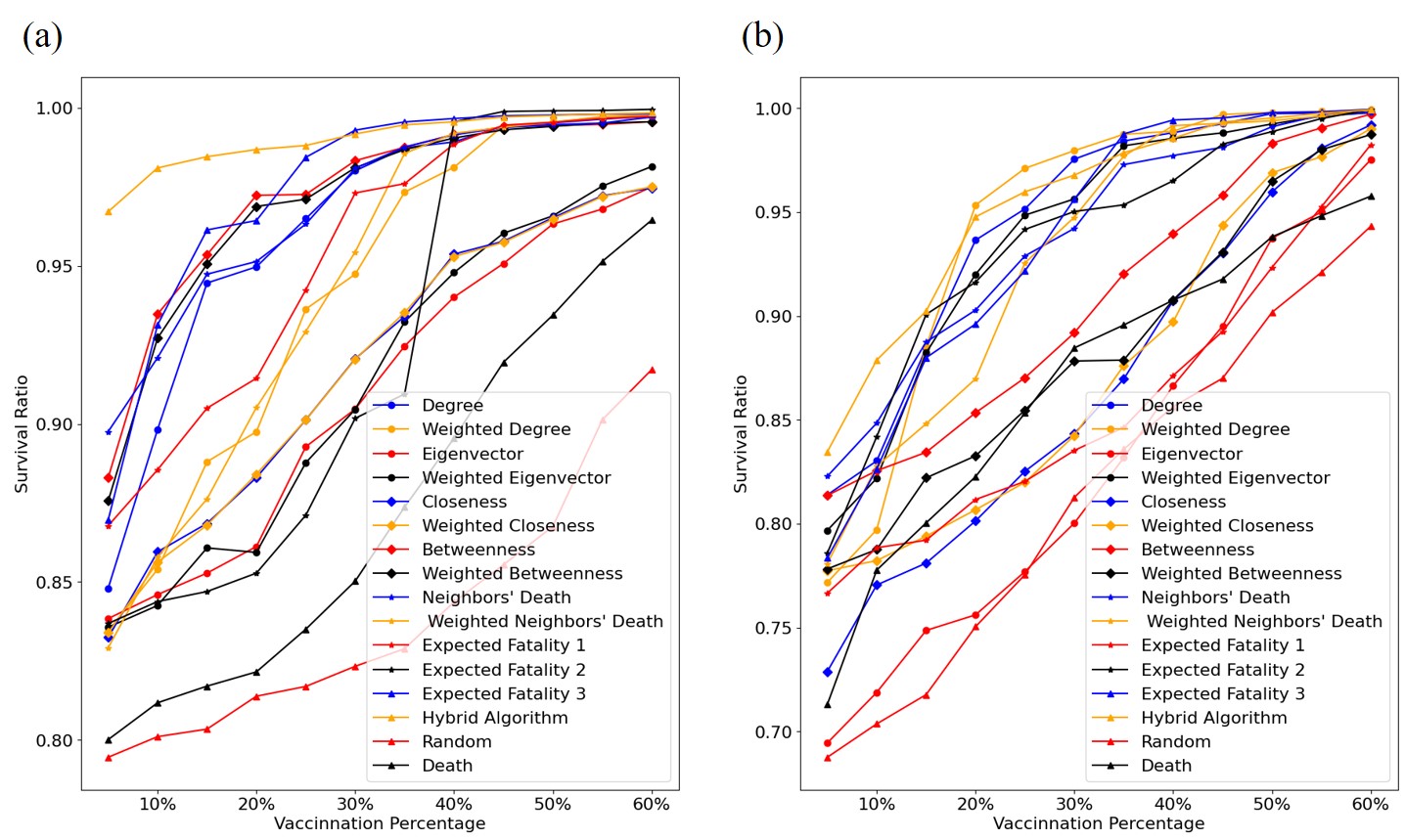}
    \caption{Survival Ratio (along the y-axis) for different percentages of vaccination (along the x-axis) according to sixteen different algorithms in (a)Twitter and (b) Twitter HRG network.}
    \label{fig:tw}
\end{figure*}

From, the results we can say fairly that algorithms almost behave similarly in real networks and their HRG counterparts. Though the performance of the Weighted Eigenvector was not very good in other networks, it performed well on Twitter HRG. We found Expected Fatality 2 to perform very well consistently on Facebook and Twitter HRG and in the other two scenarios, we found a sharp increase in performance around the 35\%-40\% vaccination range. Moreover, we found that weighted algorithms mostly behaved similarly to the non-weighted ones. It is also expected that vaccinating important nodes is far better than Random vaccination that's why it performed very poorly.  Also, vaccinating people with only the highest death rate allows the disease to spread to a larger population indicating other network and model parameters need to be considered for an effective strategy. The standard deviation of each result is mentioned in Appendix~\ref{std-appendix}. 

\section{Discussion}\label{Discussion}

We have provided a thorough review of several vaccination tactics for limiting simulated outbreaks on real-world and synthetic graph data in this work. In order to determine the most effective method for stopping the spread of diseases, our research set out to build successful strategies and compare them with classical centrality measure-based and death rate-based algorithms. Our proposed algorithms turned out to be more successful at greater vaccination rates. However, closeness and betweenness measurements of centrality generally fared pretty well. These findings show that it is essential for effectiveness that these centrality metrics be included in vaccination plans. It is very important to note that these results are robust as experiments were performed multiple times, so the effect of randomness is minimal. Moreover, our vaccination strategy does not depend upon the information about the initially infected nodes. 

% We have proposed an epidemic model by taking many considerations and relying on fewer assumptions, which might also be very useful in the study of epidemic models.
One significant contribution is the hybrid algorithm's higher performance when our best model-based approach and the betweenness centrality measure were combined. In most cases, this hybrid algorithm performed better than other approaches, demonstrating the potential advantages of combining various methods into a single, coherent approach. Our findings are in line with other research that discovered centrality indicators to be useful in immunization tactics~\cite{kitsak2010identification}. The hybrid algorithm's effectiveness gives credibility to the fact that integrating several approaches can result in better epidemic containment~\cite{salathe2010dynamics, lee2012exploiting}, especially if such approaches take into account both graph structure properties and fundamental characteristics of the virus spreading process.

An interesting by-product of our experiments is the observation that based on the epidemics spreading and from the performance of several vaccination strategies, we can say that the results in the synthetic graphs are quite similar to that of the real networks which approve the idea that the HRG graph models real-world networks up to a very good extent~\cite{gugelmann2012random}. 

Although our study has shown how successful the hybrid algorithm is, there are certain drawbacks to be aware of. Firstly, due to a lack of information regarding the disease parameters, $\omega_i(v)$, $\omega_d(v)$, and $\omega_r(v)$ were taken from a uniform distribution. However, in reality, these weights depend on various parameters such as age and sex and may vary substantially across different diseases.
Another problem may be all these physiological factors are impossible to know beforehand, but that can be treated by associating them with chances of other diseases, age, and other observable parameters. Secondly, while real-world contact networks are frequently dynamic and ever-evolving, our study concentrated on static networks. Investigating how our suggested tactics perform in dynamic networks may offer insightful tips for developing more flexible and reliable immunization systems. Moreover, there might be differences between the real-world connection of people through which the epidemic spreads and social networks, cf.~\cite{zhang2014comparison}. Thirdly, we have not considered re-infection or infection even after vaccination, which may be true in real-world scenarios~\cite{sciscent2021covid}. To confirm the applicability of our findings in various setups, additional study is required.

The viability of applying these algorithms in realistic contexts must also be taken into account. The total effect of these tactics can be considerably impacted by elements including vaccine availability, logistical difficulties, and public acceptability of immunization programs. Future research should therefore focus on incorporating these factors into the formulation and assessment of vaccination regimens. 

In conclusion, our research has shown the possibility of integrating different approaches in creating more potent vaccination schemes. In particular, the Hybrid vaccination algorithm showed encouraging outcomes in containing simulated epidemics. It is essential that we use the power of computational tools and network analysis to develop creative public health protection policies as we continue to face the threat of infectious illnesses. Taken together with testing, contact tracing, and quarantining this method can be proven powerful in containing future pandemics. 

\section*{Authors' Contribution}
\textbf{Sourin Chatterjee}: Conceptualization, Methodology, Software, Validation, Formal analysis, Investigation Visualization, Writing. \textbf{Ahad N. Zehmakan}: Conceptualization, Methodology, Validation, Formal analysis, Writing. 

\section*{Declaration of Competing Interest}
The authors declare that they have no known competing financial interests or personal relationships that could have appeared to influence the work reported in this paper.

\section*{Acknowledgment}
SC thanks Rudra Mukhopadhyay for his help and valuable comments on the code. 

\begin{appendices}
\section{Standard Deviation of Result} 
\label{std-appendix}
For different percentages of vaccination, we have reported the standard deviation of the Hybrid algorithm in Table~\ref{table:sd}. These results are in terms of the number of nodes where Facebook and Facebook HRG have 4039 nodes and Twitter and Twitter HRG have 81306 nodes. As one might expect the standard deviation decreases as the vaccination percentage increases. Similar behavior was observed for the other 15 algorithms; thus, to avoid redundancy, they are not included here.

\begin{table}[!ht]
\label{table:sd}
    \centering
    \caption{Standard Deviation for Hybrid Algorithm}
    \begin{tabular}{|l|l|l|l|l|}
    \hline
        Vaccine (\%) & Facebook & Facebook HRG & Twitter & Twitter HRG \\ \hline
        5\% & 115.4 & 230.45 & 4200.17 & 10860.67 \\ \hline
        10\%  & 124.21 & 196.36 & 222.69 & 2318.4 \\ \hline
        15\%  & 102 & 220.31 & 326.81 & 5984.89 \\ \hline
        20\%  & 78.41 & 129.79 & 172.26 & 5034.48 \\ \hline
        25\%  & 36.78 & 124.49 & 67.87 & 4333.26 \\ \hline
        30\%  & 22.83 & 83.51 & 89.28 & 950.22 \\ \hline
        35\%  & 20.96 & 29.53 & 35.76 & 235.28 \\ \hline
        40\% & 12.78 & 71.72 & 20.91 & 786.84 \\ \hline
        45\% & 9.39 & 73.59 & 23.09 & 666.41 \\ \hline
        50\% & 4.63 & 53.5 & 9.24 & 548.95 \\ \hline
        55\% & 3.37 & 4.63 & 9.3 & 229.48 \\ \hline
        60\% & 2.31 & 6.96 & 38.34 & 5.32 \\ \hline
    \end{tabular}
\end{table}

% These results indicate our algorithm yields reliable output and as the vaccination percentage is growing the number of deaths is decreasing hence resulting in a decrease in the heterogeneity of the output. 

\end{appendices}

\section*{Data Availability}
The data supporting this study are available on request.

\bibliographystyle{apsrev4-1}

\end{document}